\algrenewcommand\algorithmicrequire{\textbf{Input :}}
\algrenewcommand\algorithmicensure{\textbf{Output :}}
\algnewcommand{\lIf}[2]{
  \State \algorithmicif\ #1\ \algorithmicthen\ #2}
\algnewcommand{\lElsIf}[2]{
  \State \algorithmicelse\ \algorithmicif\ #1\ \algorithmicthen\ #2}
\algnewcommand{\lElse}[1]{
  \State \algorithmicelse\ #1}
\algnewcommand{\Ret}[1]{
  \State \algorithmicreturn\ #1}
\algnewcommand{\IfThenElse}[3]{
  \State \algorithmicif\ #1\ \algorithmicthen\ #2\ \algorithmicelse\ #3}
\newcommand{\bO}{{\cal O}}
\newcommand{\tO}{\tilde{\bO}}
\newcommand{\sO}{\bO^{\ast}}
\DeclarePairedDelimiter{\braces}{\{}{\}}		    
\DeclarePairedDelimiter{\bracks}{[}{]}		        
\DeclarePairedDelimiter{\parens}{(}{)}		        
\DeclarePairedDelimiter{\abs}{\lvert}{\rvert}		
\DeclarePairedDelimiterX{\setdef}[2]{\{}{\}}{#1 \mid #2}		
\DeclarePairedDelimiterXPP{\exclude}[1]{\mathopen{}\setminus}{\{}{\}}{}{#1}
\newcommand{\Part}{\textsc{Partition}}
\newcommand{\SubS}{\textsc{Subset Sum}}
\newcommand{\ESS}{\textsc{Equal Subset Sum}}
\newcommand{\SSR}{\textsc{Subset Sum Ratio}}
\newcommand{\etal}{\textsl{et al.}}
\newcommand{\opt}{\text{\normalfont opt}}
\newcommand{\myorcidID}[1]{\href{https://orcid.org/#1}{\includegraphics[height=9pt]{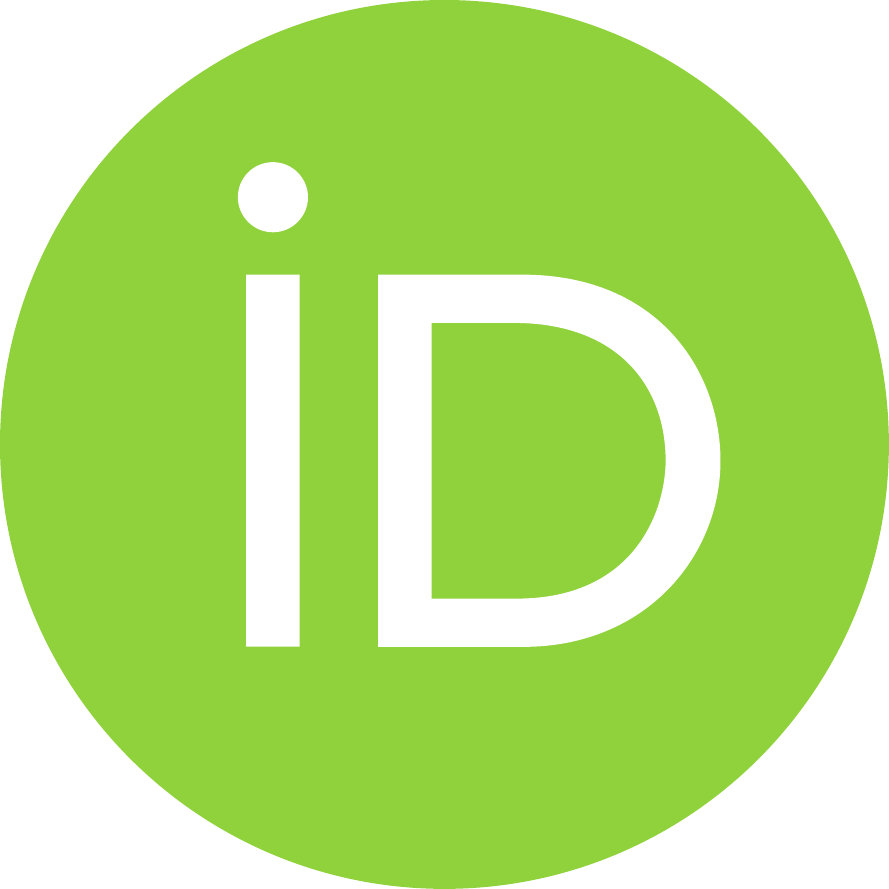}}}
\begin{document}
\title{Approximating Subset Sum Ratio via Partition Computations\thanks{Aris Pagourtzis and Stavros Petsalakis were supported in part by the PEVE 2020 basic research support program of the National Technical University of Athens.}}
%
%
\author{
    Giannis Alonistiotis\inst{1}\myorcidID{0000-0003-1277-5017} \and
    Antonis Antonopoulos\inst{1}\myorcidID{0000-0002-1368-6334} \and
    Nikolaos Melissinos\inst{2}\myorcidID{0000-0002-0864-9803}  \and
    Aris Pagourtzis\inst{1}\myorcidID{0000-0002-6220-3722}      \and
    Stavros Petsalakis\inst{1}\myorcidID{0000-0001-7825-2839}   \and
    Manolis Vasilakis\inst{2}\myorcidID{0000-0001-6505-2977}
}
\authorrunning{G. Alonistiotis et al.}
\institute{
    School of Electrical and Computer Engineering, National Technical University of Athens, Polytechnioupoli, 15780 Zografou, Athens, Greece\\
    \email{ \{ialonistiotis,aanton,spetsalakis\}@corelab.ntua.gr, pagour@cs.ntua.gr} \and
    Universit\'{e} Paris-Dauphine, PSL University, CNRS, LAMSADE, 75016 Paris, France\\
    \email{ \{nikolaos.melissinos,emmanouil.vasilakis\}@dauphine.eu}
}
\maketitle              
\begin{abstract}
    We present a new FPTAS for the \textsc{Subset Sum Ratio} problem,
    which, given a set of integers, asks for two disjoint subsets such that the ratio of their sums is as close to $1$ as possible.
    Our scheme makes use of exact and approximate algorithms for the closely related \textsc{Partition} problem,
    hence any progress over those---such as the recent improvement due to Bringmann and Nakos [SODA 2021]---carries over to our FPTAS.
    Depending on the relationship between the size of the input set $n$ and the error margin $\varepsilon$,
    we improve upon the best currently known algorithm of Melissinos and Pagourtzis [COCOON 2018] of complexity $\mathcal{O} (n^4 / \varepsilon)$.
    In particular, the exponent of $n$ in our proposed scheme may decrease down to $2$, depending on the \textsc{Partition} algorithm used.
    Furthermore, while the aforementioned state of the art complexity expressed in the form $\mathcal{O} ((n + 1 / \varepsilon)^c)$
    has constant $c = 5$, our results establish that $c < 5$.
    
\keywords{Approximation scheme \and Combinatorial optimization \and Knapsack problems \and \textsc{Subset Sum} \and \textsc{Subset Sum Ratio} \and \textsc{Partition}}
\end{abstract}

\section{Introduction} \label{sec:intro}

One of Karp's $21$ NP-complete problems~\cite{Karp72}, \SubS\ has seen astounding 
progress over the last few years.
Koiliaris and Xu~\cite{KoiliarisX19}, Bringmann~\cite{Bringmann17} and Jin and Wu~\cite{JinW19} have presented pseudopolynomial algorithms resulting in substantial improvements over the long-standing standard approach of Bellman~\cite{Bellman}, and the improvement by Pisinger~\cite{Pisinger99}.
Moreover, the latter two algorithms~\cite{Bringmann17,JinW19} match the SETH-based lower bounds proved in~\cite{AbboudBHS22}.
Additionally, recently there has been progress in the approximation scheme of \SubS, the first such improvement in over 20 years, with a new algorithm introduced by Bringmann and Nakos~\cite{BringmannN21}, as well as corresponding lower bounds obtained through the lens of fine-grained complexity.

A thoroughly studied special case of \SubS\ is the \Part\ problem,
which asks for a partition of the input set to two subsets such that the difference of their sums is minimum.
Any algorithm solving the first applies to the latter, though recent progress~\cite{BringmannN21,MuchaW019} has shown that \Part\ may be solved more efficiently in the approximation setting.
On the other hand, regarding exact solutions, no better algorithm has been developed, therefore \SubS\ algorithms remain the state of the art.

The \ESS\ problem, which, given an input set, asks for two disjoint subsets of equal sum, is closely related to \SubS\ and \Part.
It finds applications in multiple different fields, ranging from computational biology~\cite{CieliebakEP03,CieliebakE04} and computational social choice~\cite{LiptonMMS04}, to cryptography~\cite{Vol17}, to name a few.
In addition, it is related to important theoretical concepts such as the complexity of search problems in the class \textsf{TFNP}~\cite{Papadimitriou94}.

The centerpiece of this paper is the \SSR\ problem, the optimization version of \ESS, which asks, given an input set $S \subseteq \mathbb{N}$, for two disjoint subsets $S_1, S_2 \subseteq S$, such that the following ratio is minimized
\[
    \frac{\max \braces*{\sum_{s_i \in S_1} s_i, \sum_{s_j \in S_2} s_j}}
    {\min \braces*{\sum_{s_i \in S_1} s_i, \sum_{s_j \in S_2} s_j}}
\]

We present a new approximation scheme for \SSR, highlighting its close relationship with the classical \Part\ problem.
Our proposed algorithm is the first to associate these closely related problems and, depending on the relationship of the cardinality of the input set $n$ and the value of the error margin $\varepsilon$, achieves better asymptotic bounds than the current state of the art~\cite{MelissinosP18}.
Moreover, while the complexity of the current state of the art approximation scheme expressed in the form $\bO \parens{\parens{n + 1 / \varepsilon}^c}$ has an exponent $c = 5$,
we present an FPTAS with constant $c < 5$.

\subsection{Related Work}

\ESS\ as well as its optimization version called \SSR~\cite{BazganST02} are closely related to problems appearing in many scientific areas.
Some examples include the \textsc{Partial Digest} problem, which comes from computational biology~\cite{CieliebakEP03,CieliebakE04}, the
allocation of individual goods~\cite{LiptonMMS04}, tournament construction~\cite{khanphd}, and a variation of \SubS, called Multiple Integrated Sets SSP, which finds applications in the field of cryptography~\cite{Vol17}.
Furthermore, it is related to important concepts in theoretical computer science; for example, a restricted version of \ESS\ lies in a subclass of the complexity class $\mathsf{TFNP}$, namely in $\mathsf{PPP}$~\cite{Papadimitriou94}, a class consisting of search problems that always have a solution due to some pigeonhole argument, and no polynomial time algorithm is known for this restricted version.

\ESS\ has been proven NP-hard by Woeginger and Yu~\cite{WoegingerY92} (see also the full version of~\cite{MuchaNPW19} for an alternative proof) and several variations have been proven NP-hard by Cieliebak \textsl{et al.} in~\cite{CieliebakEP03b,CieliebakEPS08}.
A 1.324-approximation algorithm has been proposed for \SSR\ in~\cite{WoegingerY92} 
and several FPTASs appeared in~\cite{BazganST02,Nanongkai13,MelissinosP18}, the fastest so far being the one in~\cite{MelissinosP18} of complexity $\bO (n^4/\varepsilon)$, the complexity of which seems to also apply to various meaningful special cases, as shown in~\cite{MelissinosPT22}.

As far as exact algorithms are concerned, recent progress has shown that \ESS\ can be solved probabilistically
in\footnote{Standard $\sO$ and $\tO$ notation is used to hide polynomial and polylogarithmic factors respectively.} $\sO (1.7088^n)$ time~\cite{MuchaNPW19},
faster than a standard ``meet-in-the-middle'' approach yielding an $\sO (3^{n/2}) \le \sO (1.7321^n)$ time algorithm.

These problems are tightly connected to \SubS, which has seen impressive advances recently, due to Koiliaris and Xu~\cite{KoiliarisX19} who gave a deterministic $\tO (\sqrt{n}t)$ algorithm, where $n$ is the number of input elements and $t$ is the target, and Bringmann~\cite{Bringmann17} who gave a $\tO(n + t)$ randomized algorithm,
which is essentially
optimal under SETH~\cite{AbboudBHS22}.
See also~\cite{AntonopoulosPPV22} for an extension of these algorithms to a more general setting.
Jin and Wu subsequently proposed a simpler randomized algorithm~\cite{JinW19} achieving the same bounds as~\cite{Bringmann17}, which however seems to only solve the decision version of the problem.
Recently, Bringmann and Nakos~\cite{BringmannN20} have presented an $\bO \parens*{\lvert \mathcal{S}_t(Z) \rvert^{4/3} {\rm poly}(\log t)}$ algorithm, where $\mathcal{S}_t(Z)$ is the set of all subset sums of the input set $Z$ that are smaller than $t$, based on top-$k$ convolution.

\Part\ shares the complexity of \SubS\ regarding exact solutions,
where the meet in the middle approach~\cite{HorowitzS74} from the 70's remains the state of the art as far as algorithms dependent on $n$ are concerned.
On the other hand, one can approximate \Part\ more efficiently than \SubS\ unless the min-plus convolution conjecture~\cite{CyganMWW19} is false.
In particular, Bringmann and Nakos~\cite{BringmannN21} have presented the first improvement for the latter in over 20 years,
since the scheme of~\cite{KellererMPS03} had remained the state of the art.
Moreover, in their paper they have shown that developing a significantly better algorithm would contradict said conjecture.
Furthermore, they develop an approximation scheme for \Part\ utilizing min-plus convolution computations, improving upon the recent work of Mucha \etal~\cite{MuchaW019}
and circumventing the lower bounds established for \SubS\ in their work.

\subsection{Our Contribution}

We present a novel approximation scheme for the \SSR\ problem.
Our algorithm makes use of exact and approximation algorithms for \Part, thus, any improvement over those carries over to our proposed scheme.
Additionally, depending on the relationship between $n$ and $\varepsilon$, our algorithm improves upon the best existing approximation scheme of~\cite{MelissinosP18}.

We start by presenting some necessary background in Section~\ref{sec:prelim}.
Afterwards, in Section~\ref{sec:algo_restricted} we introduce an FPTAS for a restricted version of the problem.
In the following Section~\ref{sec:algo}, we explain how to make use of the algorithm presented in the previous section, in order to obtain an approximation scheme for the \SSR\ problem.
The complexity of the final scheme is thoroughly analyzed in Section~\ref{sec:complexity},
followed by some possible directions for future research in Section~\ref{sec:future}.
\newline\newline
\noindent\textbf{Prior Work.}
In this current paper we improve upon the results of the preliminary version~\cite{AlonistiotisAMP22},
by using approximate and exact \Part\ algorithms instead of \SubS\ computations.


\section{Preliminaries} \label{sec:prelim}
Let, for $x \in \mathbb{N}$, $[x] = \braces{0, \ldots, x}$ denote the set of integers in the interval $[0, x]$.
Given a set $S \subseteq \mathbb{N}$, denote its \emph{largest element} by $\max(S)$ and the sum of its elements by $\Sigma (S) = \sum_{s \in S} s$.
If we are additionally given a value $\varepsilon \in (0, 1)$, define the following \emph{partition} of its elements:
\begin{itemize}
    \item The set of its \emph{large} elements as $L(S, \varepsilon) = \setdef{s \in S}{s \geq \varepsilon \cdot \max(S)}$.
    Note that $\max(S) \in L(S, \varepsilon)$, for any $\varepsilon \in (0, 1)$.
    
    \item The set of its \emph{small} elements as $M(S, \varepsilon) = \setdef{s \in S}{s < \varepsilon \cdot \max(S)}$.
\end{itemize}
\noindent
In the following, since the values of the associated parameters will be clear from the context,
they will be omitted and we will refer to these sets simply as $L$ and $M$.


\begin{definition}[\Part]
Given a set $X$, compute a subset $X^*_p \subseteq X$, such that $\Sigma (X^*_p) = \max \braces*{ \Sigma (Z) \mid Z \subseteq X, \Sigma (Z) \leq \Sigma(X) / 2}$.
Moreover, let $\overline{X^*_p} = X \setminus X^*_p$.
\end{definition}

\begin{definition}[Approximate \Part, from \cite{MuchaW019}]
Given a set $X$ and error margin $\varepsilon$, compute a subset $X_p \subseteq X$ such that $(1 - \varepsilon) \cdot \Sigma (X^*_p) \leq \Sigma (X_p) \leq \Sigma (X^*_p)$.
Moreover, let $\overline{X_p} = X \setminus X_p$.
\end{definition}



\section{Scheme for a Restricted Version} \label{sec:algo_restricted}
In this section, we present an FPTAS for the constrained version of the \SSR\ problem where we are only interested in approximating solutions that involve the largest element of the input set.
In other words, one of the subsets of the optimal solution contains $\max(A) = a_n$ (assuming that $A = \braces{a_1, \ldots, a_n}$ is the \emph{sorted} input set);
let $r_{\opt}$ denote the subset sum ratio of such an optimal solution.
Our FPTAS will return a solution of ratio $r$, such that $1 \le r \le (1 + \varepsilon) \cdot r_{\opt}$, for a given error margin $\varepsilon \in (0, 1)$; however, we allow that the sets of the returned solution do not necessarily satisfy the aforementioned constraint (i.e.\ $a_n$ may not be involved in the approximate solution).

\subsection{Outline of the Algorithm}
We now present a rough outline of the algorithm~\ref{alg:outline}:
\begin{itemize}
    \item At first, we search for approximate solutions involving exclusively large elements from $L(A,\varepsilon)$.

    \item To this end, we produce the subset sums formed by these large elements.
    If their number exceeds $n / \varepsilon^2$, then we can find an approximate solution.
    
    \item Otherwise, there are at most $n / \varepsilon^2$ subsets of large elements.
    In this case, we can find a solution by running an exact or an approximate \Part\ algorithm for each subset.
    
    \item In the case that the optimal solution involves small elements, we show that it suffices to add elements of $M(A, \varepsilon)$ in a greedy way.
\end{itemize}

\begin{algorithm}[H]
	\caption{ConstrainedSSR($A, \varepsilon, T$)}
	\label{alg:outline}
    \begin{algorithmic}[1]
    	\Require Set $A = \braces{a_1, \ldots, a_n}$, error margin $\varepsilon$ and table of partial sums $T$.
    	\Ensure $(1 + \varepsilon)$-apx of the optimal solution respecting the constraint.
    	
    	\State Partition $A$ to $M = \setdef{a_i \in A}{a_i < \varepsilon \cdot a_n}$ and $L = \setdef{a_i \in A}{a_i \geq \varepsilon \cdot a_n}$.

    	\State Split interval $\bracks{0, n \cdot a_n}$ to $n / \varepsilon^2$ bins of size $\varepsilon^2 \cdot a_n$.

    	\While{filling the bins with the subset sums of $L$}
        	\If{two subset sums correspond to the same bin}
        	    \State \Return{an apx solution based on these.} \Comment{$\bO (n / \varepsilon^2)$ complexity.}
            \EndIf
        \EndWhile
	
	    \State $2^{\abs{L}} \leq n / \varepsilon^2 \iff \abs{L} \leq \log (n / \varepsilon^2)$.

        \For{each subset of large elements containing $a_n$} \Comment{$\bO (n / \varepsilon^2)$ subsets.}
            \State Solve corresponding \Part\ instance. \Comment{Complexity in Section~\ref{sec:complexity}.}
            \State Add small elements. \Comment{$\bO (\log n)$ complexity, see Subsection~\ref{subsec:plus_small}.}
        \EndFor
    \end{algorithmic}
\end{algorithm}

\subsection{Solution Involving Exclusively Large Elements}
We firstly search for an $(1 + \varepsilon)$-approximate solution with $\varepsilon \in (0,1)$, without involving any of the elements that are smaller than $\varepsilon \cdot a_n$.
Let $M = \setdef{a_i \in A}{a_i < \varepsilon \cdot a_n}$ be the set of small elements and $L = A \setminus M = \setdef{a_i \in A}{a_i \geq \varepsilon \cdot a_n}$ be the set of large elements.

After partitioning the input set, we split the interval $[0, n \cdot a_n]$ into smaller intervals, called bins, of size $l = \varepsilon^2 \cdot a_n$ each, as depicted in figure~\ref{fig:bins}.

\begin{figure}[htb]
    \centering
    \includegraphics[width=0.8\textwidth]{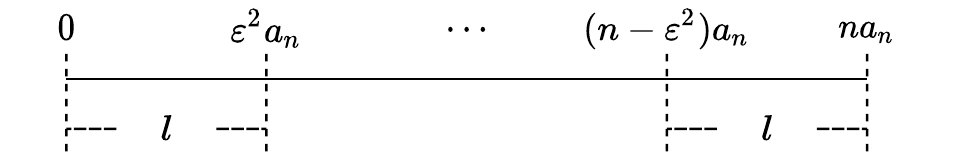}
    \caption{Split of the interval $\bracks{0, n \cdot a_n}$ to bins of size $l$.}
    \label{fig:bins}
\end{figure}

Thus, there are a total of $B = n / \varepsilon^2$ bins.
Notice that each possible subset of the input set will belong to a respective bin constructed this way, depending on its sum.
Additionally, if two sets correspond to the same bin, then the difference of their subset sums will be at most $l$.

The next step of our algorithm is to generate all the possible subset sums, occurring from the set of large elements $L$.
The complexity of this procedure is $\bO \parens*{2^{\abs{L}}}$, where $\abs{L}$ is the cardinality of set $L$.
Notice however, that it is possible to bound the number of the produced subset sums by the number of bins $B$, since if two sums belong to the same bin they constitute a solution, as shown in Lemma~\ref{lem:two_sets_one_bin}, in which case the algorithm terminates in time $\bO (n / \varepsilon^2)$. 

\begin{lemma} \label{lem:two_sets_one_bin}
    If two subsets correspond to the same bin, we can find an $(1 + \varepsilon)$-approximation solution.
\end{lemma}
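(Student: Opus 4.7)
The plan is to exhibit an explicit $(1+\varepsilon)$-approximate solution built from the two colliding subsets. Suppose $S_1, S_2 \subseteq L$ are distinct subsets whose sums fall in the same bin. Since the bin width is $l = \varepsilon^2 a_n$, we immediately get $\lvert \Sigma(S_1) - \Sigma(S_2) \rvert \leq \varepsilon^2 a_n$. The candidate solution is obtained by removing the common part: define $T_1 = S_1 \setminus S_2$ and $T_2 = S_2 \setminus S_1$. These are disjoint by construction, and because the elements in $S_1 \cap S_2$ cancel, $\lvert \Sigma(T_1) - \Sigma(T_2) \rvert = \lvert \Sigma(S_1) - \Sigma(S_2) \rvert \leq \varepsilon^2 a_n$.

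Next I would verify that $(T_1, T_2)$ is actually a valid pair, i.e.\ that neither set is empty. If, say, $T_1 = \emptyset$, then $S_1 \subsetneq S_2$, so $T_2 \subseteq L$ is nonempty with $\Sigma(T_2) \leq \varepsilon^2 a_n$. But every element of $L$ has value at least $\varepsilon a_n$, which for $\varepsilon \in (0,1)$ strictly exceeds $\varepsilon^2 a_n$, a contradiction. Hence both $T_1$ and $T_2$ are nonempty, and in particular $\min\braces{\Sigma(T_1), \Sigma(T_2)} \geq \varepsilon a_n$.

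Finally, the approximation ratio is computed directly. Assuming without loss of generality $\Sigma(T_1) \geq \Sigma(T_2)$,
\[
    \frac{\Sigma(T_1)}{\Sigma(T_2)}
    = 1 + \frac{\Sigma(T_1) - \Sigma(T_2)}{\Sigma(T_2)}
    \leq 1 + \frac{\varepsilon^2 a_n}{\varepsilon a_n}
    = 1 + \varepsilon,
\]
which, since $r_{\opt} \geq 1$, is at most $(1+\varepsilon)\, r_{\opt}$. The main (and only real) obstacle is the nonemptiness check in the second paragraph: everything else is bookkeeping, but without the lower bound $\varepsilon a_n$ on each element of $L$ the argument collapses, so it is crucial that both colliding sets be drawn from $L$ rather than from $A$.
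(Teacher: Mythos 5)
Your proposal is correct and follows essentially the same route as the paper: take the symmetric difference of the two colliding subsets, rule out emptiness via the lower bound $\varepsilon \cdot a_n$ on elements of $L$, and bound the ratio by $1 + l/(\varepsilon \cdot a_n) = 1 + \varepsilon$. The paper's proof is identical in structure, down to the observation that the nonemptiness check is the only step that genuinely uses largeness of the elements.
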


\begin{proof}
    Suppose there exist two sets $L_1, L_2 \subseteq L$ whose sums correspond to the same bin, with $\Sigma(L_1) \leq \Sigma(L_2)$.
    Notice that there is no guarantee regarding the disjointness of said subsets, thus consider $L'_1 = L_1 \setminus L_2$ and $L'_2 = L_2 \setminus L_1$, for which it is obvious that $\Sigma(L_1') \leq \Sigma(L_2')$.
    Additionally, assume that $L'_1 \neq \emptyset$.
    Then it holds that
    \[
        \Sigma(L_2') - \Sigma(L_1') =
        \Sigma(L_2) - \Sigma(L_1) \leq l.
    \]
    Therefore, the sets $L'_1$ and $L'_2$ constitute an $(1+\varepsilon)$-approximation solution, since
    \begin{align*}
        \frac{\Sigma(L_2')}{\Sigma(L_1')} &\leq
        \frac{\Sigma(L_1') + l}{\Sigma(L_1')}
        = 1 + \frac{l}{\Sigma(L_1')}\\
        &\leq 1 + \frac{\varepsilon^2 \cdot a_n}{\varepsilon \cdot a_n}
        = 1 + \varepsilon
    \end{align*}
    where the last inequality is due to the fact that $L_1' \subseteq L$ is composed of elements $\geq \varepsilon \cdot a_n$, thus $\Sigma(L_1') \geq \varepsilon \cdot a_n$.

    It remains to show that $L'_1 \neq \emptyset$.
    Assume that $L'_1 = \emptyset$.
    This implies that $L_1 \subseteq L_2$ and since we consider each subset of $L$ only once and the input is a set and not a multiset, it holds that $L_1 \subset L_2 \implies L'_2 \neq \emptyset$.
    Since $L_1$ and $L_2$ correspond to the same bin, it holds that
    \[
        \Sigma(L_2) - \Sigma(L_1) \leq l \implies
        \Sigma(L_2') - \Sigma(L_1') \leq l \implies
        \Sigma(L'_2) \leq l
    \]
    which is a contradiction, since $L'_2$ is a non empty subset of $L$, which is comprised of elements greater than or equal to $\varepsilon \cdot a_n$, hence $\Sigma (L'_2) \geq \varepsilon \cdot a_n > \varepsilon^2 \cdot a_n = l$, since $\varepsilon < 1$.
\end{proof}

Consider an $\varepsilon'$ such that $(1 + \varepsilon')/(1 - \varepsilon') \leq 1 + \varepsilon$ for all $\varepsilon \in (0, 1)$ (the exact value of $\varepsilon'$ will be computed in Section~\ref{sec:complexity}).

If every produced subset sum of the previous step belongs to a distinct bin, then,
we can infer that the number of subsets of large elements is bounded by $n / \varepsilon^2$.
Moreover, we can prove the following lemma.

\begin{lemma} \label{lem:large_apx}
If the optimal ratio $r_{\opt}$ involves sets $S^*_1, S^*_2$ consisting of only large elements,
with $S^*_1 \cup S^*_2 = S^* \subseteq L$ and $a_n \in S^*$,
then $\Sigma (\overline{S_p}) / \Sigma (S_p) \leq (1 + \varepsilon) \cdot r_{\opt}$,
where $S_p$ is an $(1 - \varepsilon')$-apx solution to the \Part\ problem on input $S^*$.
\end{lemma}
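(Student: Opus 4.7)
The plan is to reduce the lemma to the exact-Partition case and then pay the approximation factor from Approximate \Part. Write $r_{\opt} = \Sigma(S^*_2)/\Sigma(S^*_1)$ with $\Sigma(S^*_1) \le \Sigma(S^*_2)$, so that $(S^*_1, S^*_2)$ is a particular $2$-partition of $S^*$ whose smaller side has sum at most $\Sigma(S^*)/2$. Since the exact Partition optimum $S^*_p$ is, by definition, the subset of $S^*$ of largest sum not exceeding $\Sigma(S^*)/2$, we get
\[
\Sigma(S^*_1) \;\le\; \Sigma(S^*_p) \;\le\; \Sigma(\overline{S^*_p}) \;\le\; \Sigma(S^*_2),
\]
which immediately yields $\Sigma(\overline{S^*_p})/\Sigma(S^*_p) \le \Sigma(S^*_2)/\Sigma(S^*_1) = r_{\opt}$. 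So in the idealized setting where we could solve \Part\ exactly on $S^*$, the lemma would hold with no loss.

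Next I would transfer this bound from $S^*_p$ to the approximate solution $S_p$ delivered by the $(1-\varepsilon')$-approximate \Part. The lower bound is immediate: $\Sigma(S_p) \ge (1-\varepsilon')\Sigma(S^*_p)$. For the complementary side I would write
\[
\Sigma(\overline{S_p}) = \Sigma(S^*) - \Sigma(S_p) \le \Sigma(S^*_p) + \Sigma(\overline{S^*_p}) - (1-\varepsilon')\Sigma(S^*_p) = \Sigma(\overline{S^*_p}) + \varepsilon'\Sigma(S^*_p),
\]
and then invoke $\Sigma(S^*_p) \le \Sigma(\overline{S^*_p})$ to conclude $\Sigma(\overline{S_p}) \le (1+\varepsilon')\Sigma(\overline{S^*_p})$.

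Dividing the two displayed bounds and using the previous step gives
\[
\frac{\Sigma(\overline{S_p})}{\Sigma(S_p)} \;\le\; \frac{1+\varepsilon'}{1-\varepsilon'}\cdot\frac{\Sigma(\overline{S^*_p})}{\Sigma(S^*_p)} \;\le\; \frac{1+\varepsilon'}{1-\varepsilon'}\cdot r_{\opt} \;\le\; (1+\varepsilon)\cdot r_{\opt},
\]
where the last inequality is exactly the defining property of the auxiliary parameter $\varepsilon'$ announced just before the lemma.

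I do not expect any real obstacle: the only subtle point is the inequality $\Sigma(\overline{S_p}) \le (1+\varepsilon')\Sigma(\overline{S^*_p})$, which looks asymmetric at first glance but follows from the simple observation that the extra slack $\varepsilon'\Sigma(S^*_p)$ introduced by approximate Partition can be charged against the larger side $\overline{S^*_p}$ since $\Sigma(S^*_p) \le \Sigma(\overline{S^*_p})$. Everything else is just book-keeping, and the precise calibration of $\varepsilon'$ relative to $\varepsilon$ is deferred to Section~\ref{sec:complexity} as the paper indicates.
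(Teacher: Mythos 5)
Your proof is correct and follows essentially the same route as the paper's: bound the approximate Partition output against the exact Partition optimum via $\Sigma(S_p) \ge (1-\varepsilon')\Sigma(S^*_p)$ and $\Sigma(\overline{S_p}) \le \Sigma(\overline{S^*_p}) + \varepsilon'\Sigma(S^*_p)$, then pay the factor $\frac{1+\varepsilon'}{1-\varepsilon'} \le 1+\varepsilon$. The only cosmetic difference is that the paper asserts outright that $(S^*_1, S^*_2)$ \emph{is} the optimal Partition of $S^*$, whereas you only use the (weaker, and arguably cleaner) inequality chain $\Sigma(S^*_1) \le \Sigma(S^*_p) \le \Sigma(\overline{S^*_p}) \le \Sigma(S^*_2)$; both yield the same bound.
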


\begin{proof}
Assume that $\Sigma (S^*_1) \leq \Sigma (S^*_2)$.
Note that sets $S_1^*, S_2^*$ are also the optimal solution of the \Part\ problem on input $S^*$.
By running an $(1 - \varepsilon')$ approximate \Part\ algorithm on input set $S^*$, we obtain the sets $S_1, S_2$ with $\Sigma (S_1) \leq \Sigma (S_2)$, where $S_1 = S_p$ and $S_2 = \overline{S_p}$.
Then,
\begin{align*}
    \frac{\Sigma(S_2)}{\Sigma(S_1)} &\leq
    \frac{\Sigma(S^*_2) + \varepsilon' \cdot \Sigma(S^*_1)}{(1 - \varepsilon') \Sigma(S^*_1)}\\
    &\leq \frac{\Sigma(S^*_2) + \varepsilon' \cdot \Sigma(S^*_2)}{(1 - \varepsilon') \Sigma(S^*_1)}\\
    &= \frac{1 + \varepsilon'}{1 - \varepsilon'} \cdot \frac{\Sigma(S^*_2)}{\Sigma(S^*_1)}\\
    &\leq (1 + \varepsilon) \cdot r_{\opt}
\end{align*}
where we used the fact that $(1 - \varepsilon') \cdot \Sigma(S^*_1) \leq \Sigma(S_1)$ as well as $\Sigma(S_2) \leq \Sigma(S^*_2) + \varepsilon' \cdot \Sigma(S^*_1)$.
\end{proof}

Therefore, we have proved that when the optimal solution consists of sets comprised of only large elements, it is possible to find an ($1 + \varepsilon$)-approximation solution
for the constrained \SSR\ problem by running an $(1 - \varepsilon')$-approximation algorithm for \Part\ with input the union of said large elements.
In order to do so, it suffices to consider as input all the $2^{\abs{L}-1}$ subsets of $L$ containing $a_n$ and each time run an $(1 - \varepsilon')$-approximation \Part\ algorithm.
The total cost of this procedure will be thoroughly analyzed in Section~\ref{sec:complexity} and depends on the algorithm used.

It is important to note that by utilizing an (exact or approximation) algorithm for \Part, we establish a connection between the complexities of \Part\ and approximating \SSR\ in a way that any future improvement in the first carries over to the second.

\subsection{General $(1+\varepsilon)$-Approximation Solutions} \label{subsec:plus_small}
Whereas we previously considered optimal solutions involving exclusively large elements,
here we will search for approximations for those optimal solutions that use all the elements of the input set, hence include small elements, and satisfy our constraint (i.e. $a_n$ belongs to the optimal solution sets).
We will prove that in order to approximate those optimal solutions, it suffices to consider only the $(1 - \varepsilon')$-apx solutions of the \Part\ problem corresponding to each subset of large elements and add small elements to them.
In other words, instead of considering any two random disjoint subsets consisting of large elements\footnote{Note that the number of these random pairs is $2 \cdot 3^{\abs{L}-1}$,
since $a_n$ is necessarily part of the solution.} and subsequently adding to these the small elements,
we can consider only the $(1 - \varepsilon')$-approximate solutions to the \Part\ problem computed in the previous step,
ergo, at most $B = n / \varepsilon^2$ configurations regarding the large elements.
Moreover, we will prove that it suffices to add the small elements to our solution in a greedy way.

Since the algorithm has not detected a solution so far, due to Lemma~\ref{lem:two_sets_one_bin} every computed subset sum of set $L$ belongs to a different bin.
Thus, their total number is bounded by the number of bins $B$, i.e.
\[
    2^{\abs{L}} \leq \parens*{\frac{n}{\varepsilon^2}} \iff
    \abs{L} \leq \log \parens*{\frac{n}{\varepsilon^2}}
\]
We proceed by additionally involving small elements into our solutions in order to reduce the difference between the sums of the sets, thus reducing their ratio.

\begin{lemma} \label{lem:general}
Assume that we are given the $(1 - \varepsilon')$-apx solutions for the \Part\ problem on every subset of large elements containing $a_n$.
Then, an $(1 + \varepsilon)$-approximation solution for the constrained version of \SSR\ can be found,
when the optimal solution involves small elements.
\end{lemma}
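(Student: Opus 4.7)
Let $(S_1^*, S_2^*)$ denote the optimal constrained \SSR\ solution, with $\Sigma(S_1^*) \le \Sigma(S_2^*) = r_{\opt}\,\Sigma(S_1^*)$ and $a_n \in S_2^*$. Decompose each side into its large and small parts, $L_i^* = S_i^* \cap L$ and $M_i^* = S_i^* \cap M$, and set $L^* = L_1^* \cup L_2^*$, relabelling so that $\Sigma(L_1^*) \le \Sigma(L_2^*)$. Since $L^* \ni a_n$ is one of the subsets of $L$ enumerated by the algorithm, the $(1-\varepsilon')$-approximate Partition $(L_1, L_2)$ of $L^*$ (with $\Sigma(L_1) \le \Sigma(L_2)$) is available. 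My plan is to construct a candidate pair by inserting the elements of $M$ into $(L_1, L_2)$ one by one, each time on the side whose current sum is smaller, and to show that the resulting pair $(S_1, S_2)$ has ratio at most $(1+\varepsilon)\,r_{\opt}$.

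I would first bound the residual gap $\Delta_L := \Sigma(L_2) - \Sigma(L_1)$. Because $(L_1^*, L_2^*)$ is a feasible partition of $L^*$ with $\Sigma(L_1^*) \le \Sigma(L^*)/2$, the optimal Partition value satisfies $\Sigma(L_p^*) \ge \Sigma(L_1^*)$, and the approximation guarantee then yields $\Sigma(L_1) \ge (1-\varepsilon')\,\Sigma(L_1^*)$, hence $\Delta_L \le (\Sigma(L_2^*) - \Sigma(L_1^*)) + 2\varepsilon'\,\Sigma(L_1^*)$. A short induction on the greedy pass then shows that after any insertion of a small $m$ the running gap is at most the larger of its previous value and $m$, so once the gap drops below $\varepsilon\, a_n$ it never rises back above. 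Two regimes follow: (A)~if $\Sigma(M) \ge \Delta_L$, the greedy absorbs the Partition gap, $\Delta_f := \lvert\Sigma(S_2) - \Sigma(S_1)\rvert < \varepsilon\, a_n$, and the total mass $\Sigma(L^*) + \Sigma(M) \ge \Sigma(S_1^*) + \Sigma(S_2^*)$ is split almost in half, giving $\Sigma(S_1) \ge \Sigma(S_1^*) + (\Delta^* - \Delta_f)/2$ with $\Delta^* := \Sigma(S_2^*) - \Sigma(S_1^*)$; (B)~otherwise every small lands on the lighter side, so $\Sigma(S_1) = \Sigma(L_1) + \Sigma(M) \ge \Sigma(S_1^*) - \varepsilon'\,\Sigma(L_1^*)$ (using $\Sigma(M) \ge \Sigma(M_1^*)$) and $\Delta_f = \Delta_L - \Sigma(M) \le \Delta^* + 2\varepsilon'\,\Sigma(L_1^*)$.

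Finally, I would translate the additive bounds into the target ratio $\Sigma(S_2)/\Sigma(S_1) = 1 + \Delta_f/\Sigma(S_1)$. In regime (B), a direct computation bounds the ratio by $r_{\opt}\,(1+\varepsilon')/(1-\varepsilon')$, which is at most $(1+\varepsilon)\,r_{\opt}$ for the same choice of $\varepsilon'$ already invoked in Lemma~\ref{lem:large_apx} and pinned down in Section~\ref{sec:complexity}. In regime (A), if $\Delta^* \ge \varepsilon\, a_n$ then $\Delta_f \le \Delta^*$ and the candidate's ratio does not even exceed $r_{\opt}$; if instead $\Delta^* < \varepsilon\, a_n$, the bound $\Sigma(S_1^*) \ge \Sigma(S_2^*) - \Delta^* \ge (1-\varepsilon)\,a_n$ combined with $\Sigma(S_1) \ge \Sigma(S_1^*) - \varepsilon\, a_n/2$ absorbs the residual $\varepsilon\, a_n$ error into the $(1+\varepsilon)\,r_{\opt}$ target. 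The main obstacle is precisely this last sub-case, in which $r_{\opt}$ is close to $1$ and the greedy may use strictly more small elements than the optimum; it is overcome by the tight coupling between the Partition slack $\varepsilon'$, the multiplicative target $\varepsilon$, and the threshold $\varepsilon\, a_n$, which via an appropriate choice of $\varepsilon'$ keeps the accumulated error within a single $(1+\varepsilon)$ factor.
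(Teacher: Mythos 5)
Your overall strategy matches the paper's: enumerate the subsets of large elements, take the $(1-\varepsilon')$-approximate \Part\ of $L^*$, and then repair the remaining imbalance with small elements; your regime (B) is essentially the paper's Case 2.a (and your direct computation there does go through, even without the paper's explicit split on whether the ordering of $\Sigma(L_1^*),\Sigma(L_2^*)$ agrees with that of $\Sigma(S_1^*),\Sigma(S_2^*)$, which you gloss over by relabelling). The substantive divergence is in how the small elements are inserted, and that is where a genuine gap appears.

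In your regime (A) you add \emph{all} of $M$, one element at a time, always to the currently lighter side, and conclude that the final gap satisfies $\Delta_f < \varepsilon a_n$. That much is fine, but the resulting ratio is $1+\Delta_f/\Sigma(S_1)$, and the best lower bound available for the final lighter side is $\Sigma(S_1)=\Sigma(S_2)-\Delta_f\ge a_n-\varepsilon a_n$ (the heavier side never drops below its initial value $\Sigma(\overline{L_p})\ge a_n$). This yields only
\[
\frac{\Sigma(S_2)}{\Sigma(S_1)}\;\le\;\frac{a_n}{a_n-\varepsilon a_n}\;=\;\frac{1}{1-\varepsilon}\;>\;1+\varepsilon,
\]
which does not suffice when $r_{\opt}$ is close to $1$ (exactly your problematic sub-case $\Delta^*<\varepsilon a_n$). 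Your proposed remedy --- tuning $\varepsilon'$ --- cannot close this gap: $\varepsilon'$ controls only the \Part\ approximation slack, whereas the $\varepsilon a_n$ overshoot comes from the granularity of the small elements, i.e.\ from the threshold defining $M$, which is fixed independently of $\varepsilon'$. The paper avoids the issue with an asymmetric construction: it leaves $\overline{L_p}$ untouched and adds a prefix $M_k$ of small elements only to $L_p$ until $0\le\Sigma(L_p\cup M_k)-\Sigma(\overline{L_p})\le\varepsilon a_n$, so the \emph{denominator} of the ratio is the unmodified $\Sigma(\overline{L_p})\ge a_n$ and the bound $1+\varepsilon$ is exact. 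You could instead rescale $\varepsilon$ by a constant throughout the algorithm, but as written your regime (A) endgame does not deliver the claimed $(1+\varepsilon)\,r_{\opt}$ guarantee.
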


\begin{proof}
Let $S^*_1,S^*_2$ be disjoint subsets that form an optimal solution for the constrained version of \SSR, where:
\begin{itemize}
    \item $\Sigma(S^*_1) \leq \Sigma(S^*_2)$ and $a_n \in S^* = S^*_1 \cup S^*_2$.
    \item $S^*_1 = L^*_1 \cup M^*_1$ and $S^*_2 = L^*_2 \cup M^*_2$, where $L^*_1, L^*_2 \subseteq L$ and $M^*_1, M^*_2 \subseteq M$.
    \item $M^*_1 \cup M^*_2 \neq \emptyset$.
\end{itemize}
Note that, due to Lemma~\ref{lem:two_sets_one_bin}, it holds that $\Sigma(L^*_1) \neq \Sigma(L^*_2)$.
Moreover, let $L^*_p$ and $\overline{L^*_p}$ be the optimal solution of the \Part\ problem on input $L_1^* \cup L_2^*$,
while $L_p$ and $\overline{L_p}$ be the sets returned by an $(1 - \varepsilon')$-apx algorithm.

In this case, it holds that:
\begin{itemize}
    \item $\Sigma(L^*_p) \leq \Sigma(\overline{L^*_p})$ and $\Sigma(\overline{L^*_p}) - \Sigma(L^*_p) \leq \abs{\Sigma (L^* \setminus X) - \Sigma(X)},
    \forall X \subseteq L^* = L^*_1 \cup L^*_2$.
    \item $(1 - \varepsilon') \cdot \Sigma(L^*_p) \leq \Sigma(L_p) \leq \Sigma(L^*_p)$.
    \item $\Sigma(\overline{L^*_p}) \leq \Sigma(\overline{L_p}) \leq \Sigma(\overline{L^*_p}) + \varepsilon' \cdot \Sigma(L^*_p) \leq (1 + \varepsilon') \cdot \Sigma(\overline{L^*_p})$.
    \item $a_n \leq \Sigma(\overline{L^*_p})$, since $a_n$ is an element of the input set.
\end{itemize}

\noindent\textbf{Case 1.} Suppose that $\Sigma(L^*_1) > \Sigma(L^*_2)$.
Then, $\Sigma(L^*_1) - \Sigma(L^*_2) \leq \Sigma(M)$, since
\[
    \Sigma(L^*_1) \leq 
    \Sigma(S^*_1) \leq
    \Sigma(S^*_2) =
    \Sigma(L^*_2) + \Sigma(M^*_2) \leq
    \Sigma(L^*_2) + \Sigma(M).
\]

Additionally, it holds that $\Sigma(\overline{L^*_p}) \leq \Sigma(L^*_1)$ and $\Sigma(L^*_p) \geq \Sigma(L^*_2)$,
therefore $\Sigma(\overline{L^*_p}) - \Sigma(L^*_p) \leq \Sigma(M)$ follows.

\noindent\textbf{Case 1.a.} Let $\Sigma(L_p) + \Sigma(M) \geq \Sigma(\overline{L_p})$.
In this case, there exists $k$ such that $M_k = \setdef{a_i \in M}{i \in [k]} \subseteq M$ and also $0 \leq \Sigma(L_p \cup M_k) - \Sigma(\overline{L_p}) \leq \varepsilon \cdot a_n$ (since all small elements have value less than $\varepsilon \cdot a_n$).
Notice that $\Sigma(\overline{L_p}) \geq a_n$.
Hence
\[
    1 \leq \frac{\Sigma (L_p \cup M_k)}{\Sigma(\overline{L_p})} \leq
    1 + \frac{\varepsilon \cdot a_n}{\Sigma (\overline{L_p})} \leq
    1 + \frac{\varepsilon \cdot a_n}{a_n} = 1 + \varepsilon.
\]

\noindent\textbf{Case 1.b.} Let $\Sigma(L_p) + \Sigma(M) < \Sigma(\overline{L_p})$.
It suffices to show that
\[
    \frac{\Sigma (\overline{L_p})}{\Sigma(L_p \cup M)} \leq
    1 + \varepsilon.
\]
By assuming the contrary, it holds that
\begin{align*}
    \Sigma (\overline{L_p}) &>
    (1 + \varepsilon) \cdot \Sigma(L_p \cup M) \\
    &= (1 + \varepsilon) \cdot \parens*{\Sigma(L_p) + \Sigma(M)} \\
    &\geq (1 + \varepsilon)(1 - \varepsilon') \cdot \Sigma(L^*_p) + (1 + \varepsilon) \cdot \Sigma(M) \\
    &\geq (1 + \varepsilon') \cdot \Sigma(L^*_p) + (1 + \varepsilon) \cdot \Sigma(M) \\
    &\geq (1 + \varepsilon') \cdot (\Sigma(L^*_p) + \Sigma(M)).
\end{align*}
Additionally, it holds that $(1 + \varepsilon') \cdot \Sigma (\overline{L^*_p}) \geq \Sigma (\overline{L_p})$,
thus
\[
    \Sigma (\overline{L^*_p}) > \Sigma(L^*_p) + \Sigma(M)
\]
follows, which is a contradiction since $\Sigma(\overline{L^*_p}) - \Sigma(L^*_p) \leq \Sigma(M)$.

\noindent\textbf{Case 2.} Suppose that $\Sigma(L^*_1) < \Sigma(L^*_2)$.
Now there are two cases:

\noindent\textbf{Case 2.a.} Suppose that $\Sigma (L^*_1) + \Sigma(M) \leq \Sigma (L^*_2)$.
Then, it holds that $S_1^* = L_1^* \cup M$ and $S_2^* = L_2^*$ and one can infer that
$L_1^* = L^*_p$ and $L_2^* = \overline{L^*_p}$. Then,
\begin{align*}
    \frac{\Sigma(\overline{L_p})}{\Sigma(L_p \cup M)} &=
    \frac{\Sigma(\overline{L_p})}{\Sigma(L_p) + \Sigma(M)}\\
    &\leq \frac{(1 + \varepsilon') \cdot \Sigma (L^*_2)}{ (1 - \varepsilon') \cdot \Sigma (L^*_1) + \Sigma(M)}\\
    &\leq \frac{1 + \varepsilon'}{1 - \varepsilon'} \cdot \frac{\Sigma (L^*_2)}{\Sigma (L^*_1) + \Sigma(M)}\\
    &\leq (1 + \varepsilon) \cdot \frac{\Sigma (S^*_{2})}{\Sigma (S^*_{1})}
\end{align*}

\noindent\textbf{Case 2.b.} Suppose that $\Sigma (L^*_1) + \Sigma(M) > \Sigma (L^*_2)$.
We will proceed in a similar way as in cases 1.a. and 1.b.

\noindent\textbf{Case 2.b.i.} Let $\Sigma (L_p) + \Sigma(M) \geq \Sigma (\overline{L_p})$.
In this case, there exists $k$ such that $M_k = \setdef{a_i \in M}{i \in [k]} \subseteq M$ and also $0 \leq \Sigma(L_p \cup M_k) - \Sigma(\overline{L_p}) \leq \varepsilon \cdot a_n$ (since all small elements have value less than $\varepsilon \cdot a_n$).
Notice that $\Sigma(\overline{L_p}) \geq a_n$.
Hence
\[
    1 \leq \frac{\Sigma (L_p \cup M_k)}{\Sigma(\overline{L_p})} \leq
    1 + \frac{\varepsilon \cdot a_n}{\Sigma (\overline{L_p})} \leq
    1 + \frac{\varepsilon \cdot a_n}{a_n} = 1 + \varepsilon.
\]

\noindent\textbf{Case 2.b.ii.} Let $\Sigma(L_p) + \Sigma(M) < \Sigma(\overline{L_p})$.
It suffices to show that
\[
    \frac{\Sigma (\overline{L_p})}{\Sigma(L_p \cup M)} \leq
    1 + \varepsilon.
\]
By assuming the contrary, it holds that
\begin{align*}
    \Sigma (\overline{L_p}) &>
    (1 + \varepsilon) \cdot \Sigma(L_p \cup M) \\
    &= (1 + \varepsilon) \cdot \parens*{\Sigma(L_p) + \Sigma(M)} \\
    &\geq (1 + \varepsilon)(1 - \varepsilon') \cdot \Sigma(L^*_p) + (1 + \varepsilon) \cdot \Sigma(M) \\
    &\geq (1 + \varepsilon') \cdot \Sigma(L^*_p) + (1 + \varepsilon) \cdot \Sigma(M) \\
    &\geq (1 + \varepsilon') \cdot (\Sigma(L^*_p) + \Sigma(M)).
\end{align*}
Additionally, it holds that $(1 + \varepsilon') \cdot \Sigma (\overline{L^*_p}) \geq \Sigma (\overline{L_p})$,
thus
\[
    \Sigma (\overline{L^*_p}) > \Sigma(L^*_p) + \Sigma(M)
\]
follows, which is a contradiction since $\Sigma(\overline{L^*_p}) - \Sigma(L^*_p) < \Sigma(M)$,
due to the fact that $\Sigma(L^*_2) - \Sigma(L^*_1) < \Sigma(M)$ and $\Sigma(\overline{L^*_p}) - \Sigma(L^*_p) \leq \Sigma(L^*_2) - \Sigma(L^*_1)$.
\end{proof}

\subsubsection*{Adding Small Elements Efficiently.}
Here, we will describe a method to efficiently add small elements to our sets.
In particular, we search for some $k$ such that $0 \leq \Sigma(L_p \cup M_k) - \Sigma(\overline{L_p}) \leq \varepsilon \cdot a_n$, where $M_k = \setdef{a_i \in M}{i \in [k]}$.
Notice that if $\Sigma (M) \geq \Sigma(\overline{L_p}) - \Sigma(L_p)$, there always exists such a set $M_k$, since by definition, each element of set $M$ is smaller than $\varepsilon \cdot a_n$.
In order to determine $M_k$, we make use of an array of partial sums $T[k] = \Sigma (M_k)$,
where $k \leq \abs{M}$.
Since $T$ is sorted, we can find $k$ in $\bO (\log \abs{L}) = \bO (\log n)$ using binary search.



\section{Final Algorithm} \label{sec:algo}
The algorithm presented in the previous section constitutes an approximation scheme for \SSR\ when one of the solution subsets contains the maximum element of the input set.
Thus, in order to solve the \SSR\ problem, it suffices to run the previous algorithm $n$ times, where $n$ depicts the cardinality of the input set $A$, while each time removing the max element of $A$.

In particular, suppose that the optimal solution involves disjoint sets $S_1^*$ and $S_2^*$, where $a_k = \max (S_1^* \cup S_2^*)$.
There exists an iteration for which the algorithm considers as input the set $A_k = \braces{a_1, \ldots, a_k}$.
In this iteration, the element $a_k$ is the largest element and the algorithm searches for an approximation of the optimal solution for which $a_k$ is contained in one of the solution subsets.
The optimal solution of the unconstrained version of \SSR\ has this property so the ratio of the approximate solution that the algorithm of the previous section returns is at most $(1 + \varepsilon)$ times the optimal.

Consequently, $n$ repetitions of the algorithm suffice to construct an FPTAS for \SSR.

Notice that if at some repetition,
the sets returned due to the algorithm of Section~\ref{sec:algo_restricted} have ratio at most $1 + \varepsilon$,
then this ratio successfully approximates the optimal ratio $r_{\opt} \geq 1$, since $1 + \varepsilon \leq (1 + \varepsilon) \cdot r_{\opt}$,
therefore they constitute an approximation solution.

\begin{algorithm}[htb]
    \caption{SSR($A, \varepsilon$)}
    \label{alg:final}
    \begin{algorithmic}[1]
        \Require Sorted set $A = \braces{a_1, \ldots, a_n}$ and error margin $\varepsilon$.
    	\Ensure $(1 + \varepsilon)$-apx of the optimal solution for \SSR.

        \State Create array $T$ such that $T[k] = \sum_{i=1}^k a_i$. \Comment{$\Theta(n)$ time.}
        \For{$i = n, \ldots, 1$}
            \State ConstrainedSSR($\braces{a_1, \ldots, a_i}, \varepsilon, T$)
        \EndFor
    \end{algorithmic}
\end{algorithm}



\section{Complexity}
\label{sec:complexity}

The total complexity of the final algorithm is determined by three distinct operations, over the $n$ iterations of the algorithm:
\begin{enumerate}
    \item The cost to compute all the possible subset sums occurring from large elements.
    It suffices to consider the case where this is bounded by the number of bins $B = n / \varepsilon^2$, due to Lemma~\ref{lem:two_sets_one_bin}.
    
    \item The cost to compute an $(1 - \varepsilon')$-apx solution for \Part\ on each subset of large elements.
    The cost of this operation will be analyzed in the following subsection.

    \item The cost to include small elements to the $(1 - \varepsilon')$-apx solutions for \Part.
    There are $B$ such solutions, and each requires $\bO (\log n)$ time, thus the total time required is $\bO \parens*{\frac{n}{\varepsilon^2} \cdot \log n}$.
\end{enumerate}

\subsection{Complexity of Partition Computations}

\subsubsection*{Using Exact Partition Computations.}
Firstly, we will consider the case where we compute the optimal solution of the \Part\ problem.
In order to do so, we will use the standard meet in the middle algorithm~\cite{HorowitzS74} for \SubS,
and in the following we analyze its complexity.

Let subset $L' \subseteq L$ such that $\abs{L'} = k$.
The meet in the middle algorithm on the set $L'$ costs time
\[
    \bO \parens*{2^{\abs{L'} / 2} \cdot \abs{L'}}
\]
Notice that the number of subsets of $L$ of cardinality $k$ is $\binom{\abs{L}}{k}$ and that $\abs{L} \leq \log (n / \varepsilon^2)$.
Furthermore, let $c = \parens*{1 + \sqrt{2}}$, where $\log c = 1.271... < 1.3$.
Then, it holds that
\begin{align*}
    \sum_{k = 0}^{\abs{L}} \binom{\abs{L}}{k} \cdot 2^{k / 2} \cdot k
    &\leq \abs{L} \cdot \sum_{k = 0}^{\abs{L}} \binom{\abs{L}}{k} \cdot 2^{k / 2}\\
    &= \abs{L} \cdot \parens*{1 + \sqrt{2}}^{\abs{L}}\\
    &= \abs{L} \cdot c^{\abs{L}}\\
    &\leq \log (n / \varepsilon^2) \cdot c^{\log (n / \varepsilon^2)}\\
    &= \log (n / \varepsilon^2) \cdot (n / \varepsilon^2)^{\log c}
\end{align*}
where we used the Binomial Theorem.
Consequently, the complexity to solve the \Part\ problem for all the subsets of large elements is
\[
    \bO \parens*{\frac{n^{1.3}}{\varepsilon^{2.6}} \cdot \log (n / \varepsilon^2)}
\]

\subsubsection*{Using Approximate Partition Computations.}
Here we will analyze the complexity in the case we run an approximate \Part\ algorithm
in order to compute the $(1 - \varepsilon')$-approximation solutions.

For subset $L' \subseteq L$, we run an approximate \Part\ algorithm with error margin $\varepsilon'$ such that
\[
    \frac{1 + \varepsilon'}{1 - \varepsilon'} \leq 1 + \varepsilon \iff \varepsilon' \leq \frac{\varepsilon}{2 + \varepsilon}
\]
and by choosing the maximum such $\varepsilon'$, it holds that
\[
    \varepsilon' = \frac{\varepsilon}{2 + \varepsilon} \implies
    \frac{1}{\varepsilon'} = \frac{2 + \varepsilon}{\varepsilon} =
    \frac{2}{\varepsilon} + 1 \implies
    \frac{1}{\varepsilon'} = \bO \parens*{\frac{1}{\varepsilon}}
\]
Since there are at most $n / \varepsilon^2$ subsets of large elements,
we will need to run said algorithm at most $n / \varepsilon^2$ times on $\abs{L'} \leq \abs{L}$ elements and with error margin $\varepsilon'$.

Note that any approximate \SubS\ algorithm could be used in order to approximate \Part,
such as the one presented by Kellerer \etal~\cite{KellererMPS03} of complexity $\bO \parens*{\min \braces*{\frac{n}{\varepsilon}, n + \frac{1}{\varepsilon^2} \cdot \log(1 / \varepsilon)}}$.
In our case, with $\abs{L} = \log (n / \varepsilon^2)$ and error margin $\varepsilon'$, the total complexity is
\begin{align*}
    \bO \parens*{\frac{n}{\varepsilon^2} \cdot \min \braces*{\frac{\abs{L}}{\varepsilon'}, \abs{L} + \frac{1}{(\varepsilon')^2} \cdot \log (1 / \varepsilon')}} =\\
    \bO \parens*{\frac{n}{\varepsilon^2} \cdot \min \braces*{\frac{\log (n / \varepsilon^2)}{\varepsilon}, \log (n / \varepsilon^2) + \frac{1}{\varepsilon^2} \cdot \log (1 / \varepsilon)}}
\end{align*}

Very recently, Bringmann and Nakos~\cite[Theorem 5.1]{BringmannN21} developed a dedicated algorithm for approximating \Part\ via min-plus convolution computations.
By using the classic min-plus convolution algorithm of complexity $\bO (n^2)$, one can therefore obtain a deterministic FPTAS for \Part\ running in time
\[
    \bO \parens*{\abs{L} + (1 / \varepsilon')^{3/2} \cdot \log^2 \Big( \frac{\abs{L}}{\varepsilon'} \Big)}
\]
Consequently, by using this algorithm, the final complexity due to all the approximate \Part\ computations is
\begin{align*}
    \bO \parens*{\frac{n}{\varepsilon^2} \cdot \parens*{\log (n / \varepsilon^2) + (1 / \varepsilon)^{3/2} \cdot \log^2 \Big( \frac{\log (n / \varepsilon^2)}{\varepsilon} \Big)}}
\end{align*}

\subsection{Total Complexity}
The total complexity of the algorithm occurs from the $n$ distinct iterations required and depends on the algorithm chosen to find the (exact or approximate) solution to the \Part\ problem,
since all of the presented algorithms dominate the time of the rest of the operations.
Thus, by choosing the fastest one (depending on the relationship between $n$ and $\varepsilon$), the final complexity is
\begin{align*}
    \bO \parens*{\min \braces*{
    \frac{n^{2.3}}{\varepsilon^{2.6}} \cdot \log ( n / \varepsilon^2 ),
    \frac{n^2}{\varepsilon^3} \cdot \log \frac{n}{\varepsilon^2},
    \frac{n^2}{\varepsilon^2} \parens*{\log \frac{n}{\varepsilon^2} + \frac{1}{\varepsilon^{1.5}} \cdot \log^2 \frac{\log (n / \varepsilon^2)}{\varepsilon}}
    }}
\end{align*}



\section{Conclusion and Future Work} \label{sec:future}

The main contribution of this paper, apart from the introduction of a new FPTAS for the \SSR\ problem, is the establishment of a connection between \Part\ and approximating \SSR.
In particular, we showed that any improvement over the classic meet in the middle algorithm~\cite{HorowitzS74}, or over the approximation scheme for \Part\ will result in an improved FPTAS for \SSR.

Additionally, we establish that the complexity of approximating \SSR, expressed in the form $\bO \parens{\parens{n + 1 / \varepsilon}^c}$ has an exponent $c < 5$, which is an improvement over all the previously presented FPTASs for the problem.

It is important to note however, that there is a distinct limit to the complexity that one may achieve for the \SSR\ problem using the techniques discussed in this paper.

As a direction for future research, we consider the use of exact \SubS\ or \Part\ algorithms parameterized by a \emph{concentration} parameter $\beta$, as described in~\cite{AustrinKKN15,AustrinKKN16}, where they solve the decision version of \SubS.
See also~\cite{DuttaR22} for a use of this parameter under a pseudopolynomial setting.
It would be interesting to investigate whether analogous arguments could be used to solve the optimization version.


%
\bibliographystyle{splncs04}
\bibliography{bibliography}

\begin{thebibliography}{10}
\providecommand{\url}[1]{\texttt{#1}}
\providecommand{\urlprefix}{URL }
\providecommand{\doi}[1]{https://doi.org/#1}

\bibitem{AbboudBHS22}
Abboud, A., Bringmann, K., Hermelin, D., Shabtay, D.: Seth-based lower bounds
  for subset sum and bicriteria path. {ACM} Trans. Algorithms  \textbf{18}(1),
  6:1--6:22 (2022). \doi{10.1145/3450524}

\bibitem{AlonistiotisAMP22}
Alonistiotis, G., Antonopoulos, A., Melissinos, N., Pagourtzis, A., Petsalakis,
  S., Vasilakis, M.: Approximating subset sum ratio via subset sum
  computations. In: Combinatorial Algorithms - 33rd International Workshop,
  {IWOCA} 2022. Lecture Notes in Computer Science, vol. 13270, pp. 73--85.
  Springer International Publishing, Cham (2022).
  \doi{10.1007/978-3-031-06678-8\_6}

\bibitem{AntonopoulosPPV22}
Antonopoulos, A., Pagourtzis, A., Petsalakis, S., Vasilakis, M.: Faster
  algorithms for k-subset sum and variations. J. Comb. Optim.  \textbf{45}, ~24
  (2022). \doi{10.1007/s10878-022-00928-0}

\bibitem{AustrinKKN15}
Austrin, P., Kaski, P., Koivisto, M., Nederlof, J.: Subset sum in the absence
  of concentration. In: 32nd International Symposium on Theoretical Aspects of
  Computer Science, {STACS} 2015. LIPIcs, vol.~30, pp. 48--61. Schloss Dagstuhl
  - Leibniz-Zentrum f{\"{u}}r Informatik, Dagstuhl, Germany (2015).
  \doi{10.4230/LIPIcs.STACS.2015.48}

\bibitem{AustrinKKN16}
Austrin, P., Kaski, P., Koivisto, M., Nederlof, J.: Dense subset sum may be the
  hardest. In: 33rd Symposium on Theoretical Aspects of Computer Science,
  {STACS} 2016. LIPIcs, vol.~47, pp. 13:1--13:14. Schloss Dagstuhl -
  Leibniz-Zentrum f{\"{u}}r Informatik, Dagstuhl, Germany (2016).
  \doi{10.4230/LIPIcs.STACS.2016.13}

\bibitem{BazganST02}
Bazgan, C., Santha, M., Tuza, Z.: Efficient approximation algorithms for the
  {SUBSET-SUMS} {EQUALITY} problem. J. Comput. Syst. Sci.  \textbf{64}(2),
  160--170 (2002). \doi{10.1006/jcss.2001.1784}

\bibitem{Bellman}
Bellman, R.E.: Dynamic programming. Princeton University Press, Princeton, NJ
  (1957)

\bibitem{Bringmann17}
Bringmann, K.: A near-linear pseudopolynomial time algorithm for subset sum.
  In: Proceedings of the Twenty-Eighth Annual {ACM-SIAM} Symposium on Discrete
  Algorithms, {SODA} 2017. pp. 1073--1084. {SIAM}, {USA} (2017).
  \doi{10.1137/1.9781611974782.69}

\bibitem{BringmannN20}
Bringmann, K., Nakos, V.: Top-k-convolution and the quest for near-linear
  output-sensitive subset sum. In: Proccedings of the 52nd Annual {ACM}
  {SIGACT} Symposium on Theory of Computing, {STOC} 2020. pp. 982--995. {ACM},
  New York, NY, USA (2020). \doi{10.1145/3357713.3384308}

\bibitem{BringmannN21}
Bringmann, K., Nakos, V.: A fine-grained perspective on approximating subset
  sum and partition. In: Proceedings of the 2021 {ACM-SIAM} Symposium on
  Discrete Algorithms, {SODA} 2021. pp. 1797--1815. {SIAM}, {USA} (2021).
  \doi{10.1137/1.9781611976465.108}

\bibitem{CieliebakE04}
Cieliebak, M., Eidenbenz, S.J.: Measurement errors make the partial digest
  problem np-hard. In: {LATIN} 2004: Theoretical Informatics, 6th Latin
  American Symposium. Lecture Notes in Computer Science, vol.~2976, pp.
  379--390. Springer, Berlin, Heidelberg (2004).
  \doi{10.1007/978-3-540-24698-5\_42}

\bibitem{CieliebakEP03b}
Cieliebak, M., Eidenbenz, S.J., Pagourtzis, A.: Composing equipotent teams. In:
  Fundamentals of Computation Theory, 14th International Symposium, {FCT} 2003.
  Lecture Notes in Computer Science, vol.~2751, pp. 98--108. Springer, Berlin,
  Heidelberg (2003). \doi{10.1007/978-3-540-45077-1\_10}

\bibitem{CieliebakEPS08}
Cieliebak, M., Eidenbenz, S.J., Pagourtzis, A., Schlude, K.: On the complexity
  of variations of equal sum subsets. Nord. J. Comput.  \textbf{14}(3),
  151--172 (2008)

\bibitem{CieliebakEP03}
Cieliebak, M., Eidenbenz, S.J., Penna, P.: Noisy data make the partial digest
  problem {NP}-hard. In: Algorithms in Bioinformatics, Third International
  Workshop, {WABI} 2003. Lecture Notes in Computer Science, vol.~2812, pp.
  111--123. Springer, Berlin, Heidelberg (2003).
  \doi{10.1007/978-3-540-39763-2\_9}

\bibitem{CyganMWW19}
Cygan, M., Mucha, M., Wegrzycki, K., Wlodarczyk, M.: On problems equivalent to
  (min, +)-convolution. {ACM} Trans. Algorithms  \textbf{15}(1),  14:1--14:25
  (2019). \doi{10.1145/3293465}

\bibitem{DuttaR22}
Dutta, P., Rajasree, M.S.: Algebraic algorithms for variants of subset sum. In:
  Algorithms and Discrete Applied Mathematics - 8th International Conference,
  {CALDAM} 2022. Lecture Notes in Computer Science, vol. 13179, pp. 237--251.
  Springer International Publishing, Cham (2022).
  \doi{10.1007/978-3-030-95018-7\_19}

\bibitem{HorowitzS74}
Horowitz, E., Sahni, S.: Computing partitions with applications to the knapsack
  problem. J. {ACM}  \textbf{21}(2),  277--292 (1974).
  \doi{10.1145/321812.321823}

\bibitem{JinW19}
Jin, C., Wu, H.: A simple near-linear pseudopolynomial time randomized
  algorithm for subset sum. In: 2nd Symposium on Simplicity in Algorithms,
  {SOSA} 2019. OASIcs, vol.~69, pp. 17:1--17:6. Schloss Dagstuhl -
  Leibniz-Zentrum f{\"{u}}r Informatik, Dagstuhl, Germany (2019).
  \doi{10.4230/OASIcs.SOSA.2019.17}

\bibitem{Karp72}
Karp, R.M.: Reducibility among combinatorial problems. In: Proceedings of a
  symposium on the Complexity of Computer Computations. pp. 85--103. The {IBM}
  Research Symposia Series, Springer US, Boston, MA (1972).
  \doi{10.1007/978-1-4684-2001-2\_9}

\bibitem{KellererMPS03}
Kellerer, H., Mansini, R., Pferschy, U., Speranza, M.G.: An efficient fully
  polynomial approximation scheme for the subset-sum problem. J. Comput. Syst.
  Sci.  \textbf{66}(2),  349--370 (2003). \doi{10.1016/S0022-0000(03)00006-0}

\bibitem{khanphd}
Khan, M.A.: Some problems on graphs and arrangements of convex bodies (2017).
  \doi{10.11575/PRISM/10182}

\bibitem{KoiliarisX19}
Koiliaris, K., Xu, C.: Faster pseudopolynomial time algorithms for subset sum.
  {ACM} Trans. Algorithms  \textbf{15}(3),  40:1--40:20 (2019).
  \doi{10.1145/3329863}

\bibitem{LiptonMMS04}
Lipton, R.J., Markakis, E., Mossel, E., Saberi, A.: On approximately fair
  allocations of indivisible goods. In: Proceedings of the 5th {ACM} Conference
  on Electronic Commerce (EC-2004). pp. 125--131. {ACM}, New York, NY, USA
  (2004). \doi{10.1145/988772.988792}

\bibitem{MelissinosP18}
Melissinos, N., Pagourtzis, A.: A faster {FPTAS} for the subset-sums ratio
  problem. In: Computing and Combinatorics - 24th International Conference,
  {COCOON} 2018. Lecture Notes in Computer Science, vol. 10976, pp. 602--614.
  Springer International Publishing, Cham (2018).
  \doi{10.1007/978-3-319-94776-1\_50}

\bibitem{MelissinosPT22}
Melissinos, N., Pagourtzis, A., Triommatis, T.: Approximation schemes for
  subset-sums ratio problems. Theor. Comput. Sci.  \textbf{931},  17--30
  (2022). \doi{10.1016/j.tcs.2022.07.027}

\bibitem{MuchaNPW19}
Mucha, M., Nederlof, J., Pawlewicz, J., Wegrzycki, K.: Equal-subset-sum faster
  than the meet-in-the-middle. In: 27th Annual European Symposium on
  Algorithms, {ESA} 2019. LIPIcs, vol.~144, pp. 73:1--73:16 (2019).
  \doi{10.4230/LIPIcs.ESA.2019.73}

\bibitem{MuchaW019}
Mucha, M., Wegrzycki, K., Wlodarczyk, M.: A subquadratic approximation scheme
  for partition. In: Proceedings of the Thirtieth Annual {ACM-SIAM} Symposium
  on Discrete Algorithms, {SODA} 2019. pp. 70--88. {SIAM}, {USA} (2019).
  \doi{10.1137/1.9781611975482.5}

\bibitem{Nanongkai13}
Nanongkai, D.: Simple {FPTAS} for the subset-sums ratio problem. Inf. Process.
  Lett.  \textbf{113}(19-21),  750--753 (2013). \doi{10.1016/j.ipl.2013.07.009}

\bibitem{Papadimitriou94}
Papadimitriou, C.H.: On the complexity of the parity argument and other
  inefficient proofs of existence. J. Comput. Syst. Sci.  \textbf{48}(3),
  498--532 (1994). \doi{10.1016/S0022-0000(05)80063-7}

\bibitem{Pisinger99}
Pisinger, D.: Linear time algorithms for knapsack problems with bounded
  weights. J. Algorithms  \textbf{33}(1),  1--14 (1999).
  \doi{10.1006/jagm.1999.1034}

\bibitem{Vol17}
Voloch, N.: Mssp for 2-d sets with unknown parameters and a cryptographic
  application. Contemporary Engineering Sciences  \textbf{10},  921--931 (10
  2017). \doi{10.12988/ces.2017.79101}

\bibitem{WoegingerY92}
Woeginger, G.J., Yu, Z.: On the equal-subset-sum problem. Inf. Process. Lett.
  \textbf{42}(6),  299--302 (1992). \doi{10.1016/0020-0190(92)90226-L}

\end{thebibliography}
\end{document}